\documentclass[letterpaper, 10 pt, conference]{ieeeconf}
\IEEEoverridecommandlockouts
\overrideIEEEmargins
% The preceding line is only needed to identify funding in the first footnote. If that is unneeded, please comment it out.
\usepackage{cite}
\usepackage{amsmath,amssymb,amsfonts}
\usepackage{algorithmic}
\usepackage{graphicx}
\usepackage{textcomp}
\usepackage{xcolor}
\usepackage{mathrsfs}
\usepackage{psfrag}
\usepackage{cancel}
\usepackage{verbatim}

\usepackage{circuitikz}
\usetikzlibrary{decorations.pathreplacing,calligraphy,patterns}

\makeatletter
\let\NAT@parse\undefined
\makeatother
\usepackage{hyperref}
\usepackage{theoremref}

\def\mc{\mathcal}
\def\mb{\mathbb}

\begin{document}
\sloppy
\title{\LARGE \bf Digital control of negative imaginary systems: a discrete-time hybrid integrator-gain system approach}
\author{Kanghong Shi,$\quad$Ian R. Petersen, \IEEEmembership{Life Fellow, IEEE} % <-this % stops a space
\thanks{This work was supported by the Australian Research Council under grants DP190102158 and DP230102443.}% <-this % stops a space
\thanks{K. Shi and I. R. Petersen are with the School of Engineering, College of Engineering, Computing and Cybernetics, Australian National University, Canberra, Acton, ACT 2601, Australia.
        {\tt kanghong.shi@anu.edu.au}, {\tt ian.petersen@anu.edu.au}.}%
}

\newtheorem{definition}{Definition}
\newtheorem{theorem}{Theorem}
\newtheorem{conjecture}{Conjecture}
\newtheorem{lemma}{Lemma}
\newtheorem{remark}{Remark}
\newtheorem{corollary}{Corollary}
\newtheorem{assumption}{Assumption}

\maketitle

\thispagestyle{empty}
\pagestyle{empty}

\begin{abstract}
A hybrid integrator-gain system (HIGS) is a control element that switches between an integrator and a gain, which overcomes some inherent limitations of linear controllers. In this paper, we consider using discrete-time HIGS controllers for the digital control of negative imaginary (NI) systems. We show that the discrete-time HIGS themselves are step-advanced negative imaginary systems. For a minimal linear NI system, there always exists a HIGS controller that can asymptotically stablize it. An illustrative example is provided, where we use the proposed HIGS control method to stabilize a discrete-time mass-spring system.
\end{abstract}

\begin{keywords}
negative imaginary system, hybrid integrator-gain system, discrete-time system, digital control, feedback stability, switched system.
\end{keywords}

%%%%%%%%%%%%%%%%%%%%%%% Sections %%%%%%%%%%%%%%%%%%%%%%%%%%%
\section{INTRODUCTION}
Hybrid integrator-gain systems (HIGS) are hybrid control elements introduced in \cite{deenen2017hybrid} to overcome fundamental limitations of linear time-invariant (LTI) control systems \cite{middleton1991trade,freudenberg2000survey}. A HIGS switches between an integrator mode and a gain mode so that a certain sector constraint is satisfied. To be specific, a HIGS is primarily designed to operate as an integrator, and it switches to the gain mode when its integrator dynamics tend to violate the sector constraint. The describing function of a HIGS has a phase lag of only $38.15$ degrees, which is much smaller than the $90$ degrees phase lag of an integrator. Reset elements including the Clegg integrators \cite{Clegg_1958} and first-order reset elements \cite{horowitz1975non,chait2002horowitz} also have such advantages. However, they generate discontinuous control signals which may cause chattering and degrade the system performance \cite{bartolini1989chattering}, while HIGS generate continuous control signals. HIGS controllers have attracted attention since it was introduced (e.g., see \cite{van2020hybrid,deenen2021projection,van2021frequency,van2021overcoming,van2023small,heemels2023existence}) and have found application on wafer scanners \cite{heertjes2023overview} and atomic force microscopy \cite{shi2022negative}, where the latter work was motivated by the negative imaginary property of HIGS.

Negative imaginary (NI) systems theory was introduced by Lanzon and Petersen in \cite{lanzon2008stability} and \cite{petersen2010feedback}, and has attracted attention from many control theorists \cite{xiong2010negative,song2012negative,mabrok2014generalizing,wang2015robust,bhowmick2017lti}. A typical example of NI systems is a mechanical system with colocated force actuators and position sensors. Motivated by the robust control of flexible structures \cite{preumont2018vibration,halim2001spatial,pota2002resonant}, which have highly resonant dynamics, NI systems theory uses positive position feedback control. Roughly speaking, a square real-rational proper transfer matrix $F(s)$ is said to be NI if has no poles on the open right half-plane and $j(F(j\omega)-F(j\omega)^*)\geq 0$ for all $\omega \geq 0$. The Nyquist plot of a single-input single-output (SISO) NI system is contained in the lower half of the complex plane. Under mild assumptions, an NI system $F(s)$ can be asymptotically stabilized using a strictly negative imaginary (SNI) system $F_s(s)$ in positive feedback if and only if the DC loop gain has all its eigenvalues less than unity; i.e., $\lambda_{max}(F(0)F_s(0))<1$. Compared with passivity theory which can deal with systems having relative degree of zero and one \cite{brogliato2007dissipative}, NI systems theory can deal with systems having relative degree of zero, one and two \cite{shi2024necessary}. NI systems theory has been applied in many fields including nano-positioning control \cite{mabrok2013spectral,das2014mimo,das2014resonant,das2015multivariable}, the control of lightly damped structures \cite{cai2010stability,rahman2015design,bhikkaji2011negative}, and the control of power systems \cite{chen2023nonlinear}, etc.

NI systems theory was extended to nonlinear systems in \cite{ghallab2018extending,shi2021robust,shi2023output}. Roughly speaking, a system is said to be nonlinear NI if it has a positive semidefinite storage function $V(x)$ such that $\dot V(x)\leq u^T\dot y$, where $x$, $u$ and $y$ are the state, input and output of the system, respectively. Under some assumptions, a nonlinear NI system can be stabilized using another nonlinear NI system with a certain strictness property; e.g., output strictly negative imaginary systems \cite{shi2023output}, or weakly strictly negative imaginary systems \cite{ghallab2018extending}. It is shown in \cite{shi2022negative} that a HIGS controller is a nonlinear NI system. Also, for any minimal SISO linear NI system, there exists a HIGS controller such that their closed-loop interconnection is asymptotically stable. Motivated by the effectiveness of HIGS in the control of NI systems, the paper \cite{shi2023MEMS} showed the nonlinear NI property of two variants of HIGS including the multi-HIGS which was introduced in \cite{Achten_HIGS_Skyhook_thesis_2020}, and the cascade of two HIGS. It was also proved in \cite{shi2023MEMS} that these two variants of HIGS controllers can be used in stabilizing linear NI systems. This stability result was then applied on a MEMS nanopositioner \cite{shi2023MEMS}.

However, although the use of HIGS as NI controllers follows from the stability analysis in continuous time, the control of physical systems often requires construction of digital controllers. For the purpose of digital control, a discrete-time HIGS was introduced in \cite{sharif2022discrete}, which has a similar working mechanism as the continuous-time HIGS. Meanwhile, a novel discrete-time NI systems definition was introduced in \cite{shi2023discrete}, which characterizes the dissipativity property for a ZOH sampled continuous-time NI system. Note that the discrete-time NI systems definition in \cite{shi2023discrete} is different from the previously introduced definition in \cite{ferrante2017discrete}, which was mapped from the continuous-time NI systems definition using a bilinear transform. Since the definition of discrete-time NI systems in \cite{shi2023discrete} is obtained using ZOH sampling, it is guaranteed to be satisfied by any ZOH sampled physical plant with the NI property. It is shown in \cite{shi2023discrete} that the closed-loop interconnection of a discrete-time NI system and a so-called step-advanced negative imaginary (SANI) system is asymptotically stable, given that either of the systems has some strictness property.

In this paper, we use discrete-time HIGS as controllers for NI systems. We show that a discrete-time HIGS is an SANI system. Furthermore, we establish the following stability result: for any discrete-time NI system, there exists a discrete-time HIGS controller that ensures closed-loop asymptotic stability. An illustrative example is provided, where a ZOH sampled mass-spring system is asymptotically stabilized using a HIGS controller. This paper contributes in providing a specific digital control framework for physical systems with the NI property. The implementation process of a HIGS controller only involves the selection of parameters in order to satisfy a simple condition. In comparison to the continuous-time design approach where a continuous-time controller is constructed based on the continuous-time model of the plant and subsequently discretized \cite{nevsic1999sufficient}, the advantages of the framework in the present paper are two-fold: (a) the design and implementation processes are simpler; (b) closed-loop stability is more rigorously guaranteed.

The rest of the paper is organized as follows. Section \ref{sec:pre} provides preliminary definitions and lemmas for discrete-time NI systems that are introduced in \cite{shi2023discrete}. Section \ref{sec:pre} also provides the state-space model of a discrete-time HIGS. Section \ref{sec:main} contains the main results of this paper. We show in Section \ref{sec:main} the NI property of a discrete-time HIGS. We also show that given a linear discrete-time NI plant, there always exists a HIGS controller that can stabilize the NI plant. An example is provided in Section \ref{sec:example}, where a discrete-time mass-spring system is stabilized by a HIGS controller using the proposed control framework. Section \ref{sec:conclusion} concludes the paper and discusses potential future work.

Notation: The notation in this paper is standard. $\mathbb R$ denotes the field of real numbers. $\mb N$ denotes the set of nonnegative integers. $\mathbb R^{m\times n}$ denotes the space of real matrices of dimension $m\times n$. $A^T$ denotes the transpose of a matrix $A$.  $A^{-T}$ denotes the transpose of the inverse of $A$; that is, $A^{-T}=(A^{-1})^T=(A^T)^{-1}$. $\lambda_{max}(A)$ denotes the largest eigenvalue of a matrix $A$ with real spectrum. $\|\cdot\|$ denotes the standard Euclidean norm. For a real symmetric or complex Hermitian matrix $P$, $P>0\ (P\geq 0)$ denotes the positive (semi-)definiteness of a matrix $P$ and $P<0\ (P\leq 0)$ denotes the negative (semi-)definiteness of a matrix $P$. A function $V: \mb R^n \to \mb R$ is said to be positive definite if $V(0)=0$ and $V(x)>0$ for all $x\neq 0$.

\section{PRELIMINARIES}\label{sec:pre}
\subsection{Discrete-time NI systems}
Consider the system
\begin{subequations}\label{eq:DT_nonlinear}
\begin{align}
	x_{k+1} =&\ f(x_k,u_k),\label{eq:state eq}\\
	y_k=&\ h(x_k),\label{eq:output eq}
\end{align}	
\end{subequations}
where $f:\mathbb R^n \times \mathbb R^p\to \mathbb R^n$ and $h:\mathbb R^n \to \mathbb R^p$. Here $u_k,y_k \in \mathbb R^p$ and $x_k\in \mathbb R^n$ are the input, output and state of the system at time step $k\in \mathbb N$, respectively.

\begin{definition}\label{def:DT_NNI}\cite{shi2023discrete}
The system (\ref{eq:DT_nonlinear}) is said to be a discrete-time negative imaginary (NI) system if there exists a positive definite function $V\colon \mb R^n \to \mb R$ such that for arbitrary $x_k$ and $u_k$,
\begin{equation}\label{eq:NNI ineq}
V(x_{k+1})-V(x_{k})\leq u_k^T\left(y_{k+1}-y_{k}\right),	
\end{equation}
for all $k$.
\end{definition}

We provide the necessary and sufficient linear matrix inequalities (LMI) conditions under which Definition \ref{def:DT_NNI} is satisfied by a linear system of the form
\begin{subequations}\label{eq:G(z)}
	\begin{align}
\Sigma\colon\ 		x_{k+1} =&\ Ax_k+Bu_k,\label{eq:G(z) state eq}\\
		y_k =&\ Cx_k,
	\end{align}
\end{subequations}
where $x_k\in \mathbb R^n$, $u_k,y_k\in \mb R^p$ are the system state, input and output, respectively.
\begin{lemma}\label{lemma:LMI new DT-NI}\cite{shi2023discrete}
Suppose the linear system (\ref{eq:G(z)}) satisfies $\det(I-A)\neq 0$. Then the system (\ref{eq:G(z)}) is NI with a positive definite quadratic storage function satisfying (\ref{eq:NNI ineq}) if and only if there exists a real matrix $P=P^T>0$ such that
\begin{equation*}
	A^TPA-P\leq 0 \quad \textnormal{and} \quad C = B^T(I-A)^{-T}P.
\end{equation*}
\end{lemma}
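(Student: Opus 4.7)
The plan is to parametrize the candidate storage function as $V(x)=x^{T}Px$ with $P=P^{T}>0$ and work directly with the one-step dissipation $\Delta_k:=V(x_{k+1})-V(x_k)-u_k^{T}(y_{k+1}-y_k)$. Substituting $x_{k+1}=Ax_k+Bu_k$ and $y_k=Cx_k$ expresses $\Delta_k$ as a quadratic form in the stacked vector $[x_k^{T}\ u_k^{T}]^{T}$, with symmetric coefficient matrix
\[
M=\begin{pmatrix} A^{T}PA-P & A^{T}PB-\tfrac{1}{2}(A-I)^{T}C^{T} \\ B^{T}PA-\tfrac{1}{2}C(A-I) & B^{T}PB-\tfrac{1}{2}(CB+B^{T}C^{T}) \end{pmatrix}.
\]
The inequality (\ref{eq:NNI ineq}) with this $V$ is therefore equivalent to $M\le 0$, so the task reduces to showing that the existence of some $P=P^{T}>0$ making $M\le 0$ is equivalent to the two stated conditions.

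For the sufficiency direction I would substitute $C=B^{T}(I-A)^{-T}P$ into $M$ and exploit two identities obtained from $Bu_k=(x_{k+1}-x_k)+(I-A)x_k$ together with $y_{k+1}-y_k=C(x_{k+1}-x_k)$:
\[
V(x_{k+1})-V(x_k)=(x_{k+1}-x_k)^{T}P(x_{k+1}-x_k)+2x_k^{T}P(x_{k+1}-x_k),
\]
\[
u_k^{T}(y_{k+1}-y_k)=(x_{k+1}-x_k)^{T}(I-A)^{-T}P(x_{k+1}-x_k)+x_k^{T}P(x_{k+1}-x_k).
\]
Subtracting these cancels the common $x_k^{T}P(x_{k+1}-x_k)$ cross term, and after invoking the algebraic identity $(I-A)^{-T}A^{T}P=(I-A)^{-T}P-P$, the residual quadratic reorganises into a congruence image of $A^{T}PA-P$; the hypothesis $A^{T}PA-P\le 0$ then delivers $\Delta_k\le 0$.

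For the necessity direction, setting $u_k=0$ immediately collapses (\ref{eq:NNI ineq}) to $x_k^{T}(A^{T}PA-P)x_k\le 0$ for all $x_k$, yielding $A^{T}PA-P\le 0$. The algebraic constraint on $C$ is then recovered from the off-diagonal block of $M$: requiring $M\le 0$ as $u_k$ is scaled forces the linear-in-$u_k$ contribution to $\Delta_k$ to be absorbed by the quadratic-in-$u_k$ term, and the hypothesis $\det(I-A)\neq 0$ lets one invert the resulting linear system to obtain $C=B^{T}(I-A)^{-T}P$ uniquely. The main obstacle will be executing the sufficiency algebra cleanly: the non-symmetric factor $(I-A)^{-T}P$ introduced through $C$ demands careful bookkeeping of transposes, and verifying that no bilinear residue in $(x_k,u_k)$ survives outside the final congruence of $A^{T}PA-P$ is where the real work lies.
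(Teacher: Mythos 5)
This lemma is imported from \cite{shi2023discrete}; the present paper states it without proof, so your attempt can only be judged on its own terms. Your reduction to a quadratic form in $(x_k,u_k)$ and your sufficiency route are workable, but there is a normalization slip that breaks the computation as written: with $V(x)=x^{T}Px$ the cross terms in your two identities are $2x_k^{T}P(x_{k+1}-x_k)$ and $x_k^{T}P(x_{k+1}-x_k)$, so they do \emph{not} cancel and an indefinite bilinear term survives. The equality $C=B^{T}(I-A)^{-T}P$ is matched to $V(x)=\tfrac{1}{2}x^{T}Px$ (equivalently, with your $V$ the constraint comes out as $C=2B^{T}(I-A)^{-T}P$, after which one rescales $P$). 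With that fixed, sufficiency closes exactly as you anticipate: writing $d_k=x_{k+1}-x_k$ one gets $V(x_{k+1})-V(x_k)-u_k^{T}(y_{k+1}-y_k)=\tfrac{1}{2}d_k^{T}\left(P-(I-A)^{-T}P-P(I-A)^{-1}\right)d_k$, and the congruence $(I-A)^{T}\left((I-A)^{-T}P+P(I-A)^{-1}-P\right)(I-A)=P-A^{T}PA\geq 0$ finishes the argument.

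The genuine gap is in the necessity direction. Scaling $u_k$ can never produce an equality constraint: if $\alpha+\beta\lambda+\gamma\lambda^{2}\leq 0$ for all $\lambda$, one only obtains $\beta^{2}\leq 4\alpha\gamma$, so negative semidefiniteness of $M$ constrains its off-diagonal block merely through a Schur-type inequality coupling it to the diagonal blocks; no linear system ``to invert'' ever appears, and your sketch never explains why $C$ is pinned to one exact value. The missing idea is to use the stationary directions: since $\det(I-A)\neq 0$, for every $u$ the choice $x=(I-A)^{-1}Bu$ gives $x_{k+1}=x_k$, so both $V(x_{k+1})-V(x_k)$ and $u_k^{T}(y_{k+1}-y_k)$ vanish there. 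Hence the negative semidefinite quadratic form expressing the dissipation vanishes on the subspace spanned by the vectors $\left[\begin{smallmatrix}(I-A)^{-1}B\\ I\end{smallmatrix}\right]u$, and that subspace must therefore lie in its kernel. Writing out the first block row of this kernel condition and simplifying with $A^{T}PA-P+A^{T}P(I-A)=-(I-A)^{T}P$ yields $(I-A)^{T}C^{T}=(I-A)^{T}P(I-A)^{-1}B$, i.e.\ $C=B^{T}(I-A)^{-T}P$ (under the $\tfrac{1}{2}$ normalization). This is where the hypothesis $\det(I-A)\neq 0$ actually does its work; as sketched, your necessity argument does not reach the stated equality.
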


We present in the following, the definition of SANI systems. Consider the system 
\begin{subequations}\label{eq:nonlinear SANI system}
\begin{align}
\widetilde x_{k+1} =&\ \widetilde f(\widetilde x_k, \widetilde u_k),\\
	\widetilde y_k=&\ \widetilde h(\widetilde x_k, \widetilde u_k),
\end{align}	
\end{subequations}
where $\widetilde f:\mathbb R^n \times \mathbb R^p\to \mathbb R^n$ and $\widetilde h:\mathbb R^n \to \mathbb R^p$. Here $\widetilde u, \widehat y \in \mathbb R^p$ and $\widetilde x\in \mathbb R^n$ are the input, output and state of the system at time step $k\in \mathbb N$, respectively.
\begin{definition}\label{def:SANI}\cite{shi2023discrete}
	The system \eqref{eq:nonlinear SANI system} is said to be a step-advanced negative imaginary (SANI) system if there exists a function $\widehat h(x_k)$ such that:
\begin{enumerate}
	\item $\widetilde h(\widetilde x_k, \widetilde u_k)=\widehat h(\widetilde f(\widetilde x_k, \widetilde u_k))$;
	\item there exists a positive definite function $\widetilde V\colon \mb R^n \to \mb R$ such that for arbitrary state $\widetilde x_k$ and input $\widetilde u_k$,
\begin{equation*}
\widetilde V(\widetilde x_{k+1})-\widetilde V(\widetilde x_{k})\leq \widetilde u_k^T\left(\widehat h(\widetilde x_{k+1})-\widehat h(\widetilde x_{k})\right)
\end{equation*}
for all $k$.
\end{enumerate}
\end{definition}
\begin{remark}\label{remark:NI_SANI}
	Definition \ref{def:SANI} can be regarded as a variant of Definition \ref{def:DT_NNI} in a way such that the system output takes one step advance. To be specific, suppose the system (\ref{eq:DT_nonlinear}) is NI as per Definition \ref{def:DT_NNI}. Then a system with the same state equation (\ref{eq:state eq}) and an output equation $\widetilde y_k = h(x_{k+1})=h(f(x_k,u_k))$ is an SANI system. Note that this does not affect the causality of the system because $h(f(x_k,u_k))$ is a function of the state $x_k$ and input $u_k$ of the current step $k$.
\end{remark}

\subsection{Discrete-time hybrid integrator-gain systems}
Discrete-time HIGS were introduced in \cite{sharif2022discrete}. We adapt the model in \cite{sharif2022discrete} to fit the system model (\ref{eq:DT_nonlinear}) in the following.
	\begin{equation}\label{eq:DT-HIGS}
		\mathcal{H}:
		\begin{cases}
			x_h(k+1) = x_h(k)+\omega_h e(k), & \text{if}\, (x_h(k),e(k)) \in \mathcal{F}\\
			x_h(k+1) = k_he(k), & \text{if}\, (x_h(k),e(k)) \notin \mathcal{F}\\
			y_h(k) = x_h(k+1).
		\end{cases}
	\end{equation}
Here, $e(k),x_h(k),y_h(k)\in \mb R$ are the system input, state and output, respectively. The constant parameters $\omega_h\geq 0$ and $k_h>0$ are called the integrator frequency and the gain value, respectively. The HIGS is designed to operate under the sector constraint $(x_h(k),e(k)) \in \mathcal{F}$, where $\mc F$ is given by
\begin{align}\label{eq:F}
	\mc F =& \{(x_h(k),e(k))\in \mb R^2\mid \notag\\
	 &(x_h(k)+\omega_h e(k))e(k)\geq \frac{1}{k_h}(x_h(k)+\omega_h e(k))^2\}.
\end{align}
At time step $k$, if $(e(k),y_h(k))\in \mc F$, then $(e(k),y_h(k))$ is contained in the sector $[0,k_h]$.
The HIGS is designed to operate primarily in the integrator mode if the input $e(k)$ leads to an output $y_h(k)$ within the sector $[0,k_h]$ under the integrator mode dynamics. Otherwise, the system operates in the gain mode so that $y_h(k)=k_he(k)$, which automatically satisfies the sector constraint $[0,k_h]$. According to (\ref{eq:DT-HIGS}), regardless of the initial condition $x_h(0)$, the discrete-time HIGS will remain in the sector given in $\mc F$ from the time step $k=1$. In what follows, we denote $e(k)$, $x_h(k)$ and $y_h(k)$ by $e_k$, $\widetilde x_k$ and $\widetilde y_k$ respectively for convenience. Note that the parameter $\omega_h$ in the present paper corresponds to the product $\omega_h T_s$ in \cite{sharif2022discrete}, where $\omega_h$ is the integrator frequency of the corresponding continuous-time integrator and $T_s$ is the sampling period. Since we only consider the discrete-time case in the present paper, we regard $\omega_h$ as the discrete-time integrator frequency.  

\section{MAIN RESULTS}\label{sec:main}
\subsection{SANI property of the HIGS}
We show in the following that the HIGS given in (\ref{eq:DT-HIGS}) is an SANI system. 
\begin{theorem}
The system given in (\ref{eq:DT-HIGS}) is an SANI system with the storage function
\begin{equation}\label{eq:HIGS storage function}
	\widetilde V(\widetilde x_k) = \frac{1}{2k_h}\widetilde x_k^2
\end{equation}
satisfying
\begin{equation}\label{eq:NNI ineq for HIGS aux}
	\widetilde V(\widetilde x_{k+1})-\widetilde V(\widetilde x_k)\leq e_k(\widetilde x_{k+1}-\widetilde x_k),
\end{equation}
for any input $e_k$ and state $\widetilde x_k$.
\end{theorem}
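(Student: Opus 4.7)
The plan is to verify the two conditions in Definition \ref{def:SANI}. For condition (1), the output equation in (\ref{eq:DT-HIGS}) reads $y_h(k) = x_h(k+1)$, so in both modes $\widetilde y_k = \widetilde x_{k+1} = \widetilde f(\widetilde x_k, e_k)$; hence I take $\widehat h$ to be the identity map, which makes condition (1) trivially hold and reduces condition (2) to exactly the inequality (\ref{eq:NNI ineq for HIGS aux}). Positive definiteness of $\widetilde V(\widetilde x_k) = \frac{1}{2k_h}\widetilde x_k^2$ is clear since $k_h > 0$. It then remains to establish (\ref{eq:NNI ineq for HIGS aux}), which I will do by a case split on the two modes of the HIGS.

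In the gain mode, where $\widetilde x_{k+1} = k_h e_k$, substituting $e_k = \widetilde x_{k+1}/k_h$ into the target expression and completing the square yields $\widetilde V(\widetilde x_{k+1}) - \widetilde V(\widetilde x_k) - e_k(\widetilde x_{k+1} - \widetilde x_k) = -\frac{1}{2k_h}(\widetilde x_{k+1} - \widetilde x_k)^2 \leq 0$, which disposes of this case immediately.

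In the integrator mode, $\widetilde x_{k+1} - \widetilde x_k = \omega_h e_k$. Plugging this in, together with $\widetilde x_k = \widetilde x_{k+1} - \omega_h e_k$ to eliminate $\widetilde x_k$ from the quadratic term, simplifies $\widetilde V(\widetilde x_{k+1}) - \widetilde V(\widetilde x_k) - e_k(\widetilde x_{k+1} - \widetilde x_k)$ to $\frac{\omega_h}{k_h}\bigl(e_k \widetilde x_{k+1} - k_h e_k^2 - \frac{\omega_h}{2}e_k^2\bigr)$. The decisive step is then to reinterpret the sector constraint (\ref{eq:F}): setting $y = \widetilde x_{k+1}$, the inequality $y e_k \geq \frac{1}{k_h} y^2$ rearranges to $\frac{y}{k_h}(k_h e_k - y) \geq 0$, which forces $y$ and $k_h e_k - y$ to have the same sign and hence yields the two-sided sector bound $0 \leq \widetilde x_{k+1} e_k \leq k_h e_k^2$. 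Applying the upper half of this bound drives the remaining expression to at most $-\frac{\omega_h^2}{2k_h} e_k^2 \leq 0$, finishing the integrator-mode case.

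The main obstacle is the step just described: the constraint (\ref{eq:F}) is stated as a single quadratic inequality, and it is not visually obvious that it encodes both endpoints of the sector $[0,k_h]$. The dissipation inequality specifically needs the upper-endpoint bound $\widetilde x_{k+1} e_k \leq k_h e_k^2$, whereas the lower bound $\widetilde x_{k+1} e_k \geq 0$ plays no role. Once this sector reformulation is in hand, the rest of the proof is routine algebra in each mode.
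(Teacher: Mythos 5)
Your proof is correct and follows essentially the same route as the paper: identity output-advance map to reduce the SANI conditions to the dissipation inequality (\ref{eq:NNI ineq for HIGS aux}) for the state, then a case split over the two modes, with the gain mode disposed of by completing the square exactly as in the paper. The only difference is cosmetic: in the integrator mode you apply the sector constraint (\ref{eq:F}) directly to $\widetilde x_{k+1}=\widetilde x_k+\omega_h e_k$, extracting $0\leq \widetilde x_{k+1}e_k\leq k_h e_k^2$ and using the upper bound (valid since $\omega_h\geq 0$, $k_h>0$), whereas the paper divides by $e_k$ and solves a quadratic inequality in $\widetilde x_k/e_k$; your version is slightly cleaner since it avoids the separate $e_k=0$ case, but the substance is the same.
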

\begin{proof}
According to Definition \ref{def:SANI} and Remark \ref{remark:NI_SANI}, the HIGS is an SANI system if it is NI from the input $e_k$ to the state $\widetilde x_k$. Hence, we prove in the following that (\ref{eq:NNI ineq for HIGS aux}) is satisfied in both integrator mode and gain mode. Substituting (\ref{eq:HIGS storage function}) into (\ref{eq:NNI ineq for HIGS aux}) yields
\begin{equation}\label{eq:NNI HIGS aux expansion}
	\frac{1}{2k_h}\widetilde x_{k+1}^2-\frac{1}{2k_h}\widetilde x_k^2	\leq e_k (\widetilde x_{k+1}-\widetilde x_k),
	\end{equation}
which is required to be satisfied in both modes.

\noindent\textit{\textbf{Case 1}}. In the integrator mode, we have the state equation $\widetilde x_{k+1} = \widetilde x_k+\omega_h e_k$ and also $(\widetilde x_k,e_k) \in \mathcal{F}$. In this case, (\ref{eq:NNI HIGS aux expansion}) becomes
\begin{equation}\label{eq:integrator mode NI property}
	2\widetilde x_k e_k\leq (2k_h-\omega_h)e_k^2,
\end{equation}
which is always satisfied when $e_k=0$. When $e_k\neq 0$, (\ref{eq:integrator mode NI property}) can be rewritten as
\begin{equation}\label{eq:HIGS integrator NI ineq target}
	2 \frac{\widetilde x_k}{e_k}\leq 2k_h-\omega_h.
\end{equation}
The condition $(\widetilde x_k,e_k) \in \mathcal{F}$ implies
\begin{equation*}
	\widetilde x_k^2+(2\omega_h-k_h)\widetilde x_ke_k+(\omega_h-k_h\omega)e_k^2 \leq 0.
\end{equation*}
This implies that for $e_k\neq 0$,
\begin{equation}\label{eq:HIGS integrator F meaning}
\left( \frac{\widetilde x_k}{e_k}\right)^2+(2\omega_h-k_h)\frac{\widetilde x_k}{e_k}+(\omega_h^2-k_h\omega_h) \leq 0.	
\end{equation}
By solving (\ref{eq:HIGS integrator F meaning}), we have that operating in the integrator mode requires the HIGS input $e_k$ and state $\widetilde x_k$ to satisfy
\begin{equation*}
	-\omega_h\leq \frac{\widetilde x_k}{e_k}\leq k_h-\omega_h.
\end{equation*}
Such a pair of $\widetilde x_k$ and $e_k$ always satisfies (\ref{eq:HIGS integrator NI ineq target}).

\noindent\textit{\textbf{Case 2}}. In the gain mode, we have that $\widetilde x_{k+1} = k_h e_k$ and $(\widetilde x_k,e_k) \notin \mathcal{F}$. In this case, (\ref{eq:NNI HIGS aux expansion}) becomes
\begin{equation*}
	\widetilde x_k^2-2k_h\widetilde x_ke_k+k_h^2e_k^2\geq 0,
\end{equation*}
which always holds because
\begin{equation*}
	\widetilde x_k^2-2k_h\widetilde x_ke_k+k_h^2e_k^2= (\widetilde x_k-k_he_k)^2\geq 0.
\end{equation*}
Since condition (\ref{eq:NNI ineq for HIGS aux}) is satisfied in both modes, then the system (\ref{eq:DT-HIGS}) is an SANI system.
\end{proof}

\begin{figure}[h!]
\centering
\psfrag{in_1}{$u_k$}
\psfrag{y_1}{$y_k$}
\psfrag{e}{$e_k$}
\psfrag{x_h}{$\widetilde y_k$}
\psfrag{plant}{\hspace{-0.1cm}$G(z)$}
\psfrag{HIGS}{\hspace{-0.3cm}$ HIGS\ {\mc H}$}
\includegraphics[width=8.5cm]{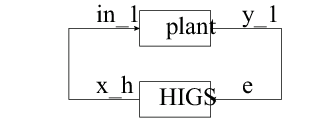}
\caption{Closed-loop interconnection of the system (\ref{eq:G(z)}) with the transfer matrix $G(z)$ and the HIGS $\mc H$ given in (\ref{eq:DT-HIGS}).}
\label{fig:interconnection}
\end{figure}

\subsection{Stability for the interconnection of a linear NI system and a HIGS}
Motivated by the SANI property of the HIGS, we investigate whether a HIGS controller can be applied in the control of a minimal SISO linear NI system. Consider a SISO system of the form (\ref{eq:G(z)}) with $u_k,y_k\in\mb R$, which has a transfer function matrix $G(z)$. We show in the following that if the system $\Sigma$ in (\ref{eq:G(z)}) is NI, then there exists a HIGS controller $\mc H$ such that the positive feedback interconnection of $\Sigma$ and $\mc H$ shown in Fig.~\ref{fig:interconnection} is asymptotically stable. The setting of the interconnection can be described as follows:
\begin{align*}
e_k =&\  y_k;\\%\label{eq:setting1}\\
u_k =&\ \widetilde  y_k.%\label{eq:setting2}	
\end{align*}
This means the HIGS $\mc H$ takes the output of the system $\Sigma$ as its input and feeds back its output to the system $\Sigma$ as its input.

\begin{theorem}\label{theorem:stability}
	Suppose the SISO minimal system (\ref{eq:G(z)}) with transfer function matrix $G(z)$ is NI and satisfies $\det(I-A)\neq 0$. Suppose the HIGS $\mc H$ of the form (\ref{eq:DT-HIGS}) satisfies $0<\omega_h\leq k_h<\frac{1}{G(1)}$. Then the closed-loop interconnection of $G(z)$ and $\mc H$ as shown in Fig.~\ref{fig:interconnection} is asymptotically stable.
\end{theorem}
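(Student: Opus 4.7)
The plan is to build a Lyapunov function for the closed loop out of the two storage functions on hand and then apply a LaSalle-type invariance argument, using the strict gain bound $k_h<1/G(1)$ together with minimality of $(A,B,C)$.

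First, from Lemma~\ref{lemma:LMI new DT-NI} I pick $P=P^T>0$ with $A^TPA-P\leq 0$ and $C=B^T(I-A)^{-T}P$, and take the plant storage $V(x)=\tfrac{1}{2}x^TPx$. The HIGS storage is $\widetilde V(\widetilde x)=\tfrac{1}{2k_h}\widetilde x^2$, and the key observation is that the SANI output is exactly $\widetilde y_k=\widetilde x_{k+1}$. Substituting $u_k=\widetilde y_k=\widetilde x_{k+1}$ in the plant NI inequality and $e_k=y_k$ in the SANI inequality, the cross terms telescope and
\begin{equation*}
W_k:=V(x_k)+\widetilde V(\widetilde x_k)-\widetilde x_k y_k
\end{equation*}
is non-increasing, with $W_{k+1}-W_k\le 0$.

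Next I check that $W$ is genuinely a Lyapunov function, i.e.\ positive definite in $(x,\widetilde x)$. Writing $W$ as a quadratic form with matrix $\bigl(\begin{smallmatrix}P & -C^T\\ -C & 1/k_h\end{smallmatrix}\bigr)$, the Schur complement reduces positivity to $1/k_h-CP^{-1}C^T>0$. The constitutive relation $C=B^T(I-A)^{-T}P$ gives $CP^{-1}C^T=B^T(I-A)^{-T}P(I-A)^{-1}B=C(I-A)^{-1}B=G(1)$, so the condition is exactly $k_h<1/G(1)$, which is the hypothesis (and implies $G(1)>0$). So $W$ is positive definite and non-increasing; Lyapunov stability follows immediately.

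For asymptotic stability I would use the discrete-time invariance principle: trajectories converge to the largest invariant set inside $\{W_{k+1}=W_k\}$. Equality in the SANI inequality for the HIGS forces $\widetilde x_{k+1}=\widetilde x_k$ in either mode (in the gain mode because $(\widetilde x_k-k_h e_k)^2=0$, and in the integrator mode because the strict inequality $\omega_h>0$ rules out any nonzero $e_k$ on the invariant sector boundary). Hence on the invariant set $\widetilde x_k\equiv\bar x_h$ and $u_k\equiv\bar x_h$. If $\bar x_h=0$, then $u_k=0$, $y_k=0$, and minimality of $(A,C)$ forces $x_k\equiv 0$. If $\bar x_h\neq 0$, then consistency with the HIGS mode forces the gain mode with $y_k=\bar x_h/k_h$ constant; iterating the plant state equation one shows $CA^jx_k$ is constant for every $j$, so by observability $x_k$ itself is constant, equal to the fixed point $(I-A)^{-1}B\bar x_h$, giving $y_k=G(1)\bar x_h=\bar x_h/k_h$ and hence $k_h=1/G(1)$, contradicting the strict inequality. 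So the invariant set reduces to the origin, and LaSalle gives asymptotic stability.

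The main obstacle I foresee is the LaSalle step, for two reasons: making the argument clean across the switched HIGS modes, and precisely exploiting the \emph{strict} inequality $k_h<1/G(1)$ (rather than the non-strict $k_h\le 1/G(1)$ needed only for $W\ge 0$) to rule out a nonzero constant HIGS state on the invariant set. The positive-definiteness calculation and the telescoping of cross terms, by contrast, are routine once one recognises the identity $CP^{-1}C^T=G(1)$.
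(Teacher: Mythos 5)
Your proposal is correct and follows essentially the same route as the paper: the same Lyapunov function $W=V(x_k)+\widetilde V(\widetilde x_k)-Cx_k\widetilde x_k$, the same Schur-complement positivity test reducing (via $CP^{-1}C^T=C(I-A)^{-1}B=G(1)$) to $k_hG(1)<1$, the same telescoping of cross terms using $u_k=\widetilde x_{k+1}$ and $e_k=y_k$, and the same mode-by-mode analysis of the equality set ending in the contradiction $k_hG(1)=1$. The only minor differences are cosmetic: in the nonzero-invariant-state case you apply observability of $(A,C)$ to the difference sequence $x_{k+1}-x_k$ (driven by a constant input), whereas the paper proves observability of $(A+k_hBC,C)$ by an eigenvector test, and your explicit appeal to a discrete-time LaSalle principle plays the role of the paper's direct argument that $W$ cannot stay constant along a trajectory away from the origin.
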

\begin{proof}
According to Lemma \ref{lemma:LMI new DT-NI}, the minimal system (\ref{eq:G(z)}) is NI if and only if there exists a matrix $P=P^T>0$ such that
\begin{equation*}
	A^TPA-P\leq 0,\quad \textnormal{and}\quad C = B^T(I-A)^{-T}P.
\end{equation*}
We construct the following Lyapunov function for the closed-loop interconnection:
\begin{align*}
	W(x_k,\widetilde x_k) =&\ V(x_k)+\widetilde V(\widetilde x_k)-Cx_k\widetilde x_k\notag\\
	=&\ \frac{1}{2}x_k^TPx_k+\frac{1}{2k_h}x_k^2-Cx_k\widetilde x_k.
\end{align*}
Rewriting this as a quadratic form, we have that 
\begin{equation*}
	W(x_k,\widetilde x_k) = \frac{1}{2}\begin{bmatrix}
		x_k^T & \widetilde x_k
	\end{bmatrix}\begin{bmatrix}
		P & -C^T \\ -C & \frac{1}{k_h}
	\end{bmatrix}\begin{bmatrix}
		x_k \\ \widetilde x_k
	\end{bmatrix}.
\end{equation*}
Using the Schur complement theorem, to ensure that $W(x_k,\widetilde x_k)$ is positive definite, we need
\begin{equation}\label{eq:schur complement ineq}
	\frac{1}{k_h}-CP^{-1}C^T>0.
\end{equation}
Since $C=B^T(I-A)^{-T}P$, then (\ref{eq:schur complement ineq}) can be rewritten as
\begin{equation*}
	\frac{1}{k_h}-C(I-A)^{-1}B>0,
\end{equation*}
which is satisfied because $G(1)=C(I-A)^{-1}B$ and
\begin{equation}\label{eq:k_h condition}
	k_hG(1)<1.
\end{equation}
Note that $G(1)\neq 0$ according to the positive definiteness of $P$ and the fact that $C$ is not a zero row vector, which is guaranteed by the minimality of the system.
We use Lyapunov's direct method \cite{Kalman1960} in the following. Taking the difference between $W(x_{k+1},\widetilde x_{k+1})$ and $W(x_{k},\widetilde x_{k})$, we have
\begin{align}
	W&(x_{k+1},\widetilde x_{k+1})-W(x_k,\widetilde x_k)\notag\\
	=&\ V(x_{k+1})+\widetilde V(\widetilde x_{k+1})-Cx_{k+1}\widetilde x_{k+1}-V(x_k)-\widetilde V(\widetilde x_k)\notag\\
	&+Cx_k\widetilde x_k\notag\\
	\leq &\ u_k(y_{k+1}-y_k)+e_k(\widetilde x_{k+1}-\widetilde x_k)-Cx_{k+1}\widetilde x_{k+1}+Cx_k\widetilde x_k\notag\\
	 =&\ \widetilde x_{k+1}(e_{k+1}-e_k)+e_k(\widetilde x_{k+1}-\widetilde x_k)-e_{k+1}\widetilde x_{k+1}+e_k\widetilde x_k\notag\\
  =&\ 0.\label{eq:W difference}
\end{align}
which implies that the system is Lyapunov stable. Furthermore, $W(x_{k+1},\widetilde x_{k+1})-W(x_k,\widetilde x_k)=0$ only if the inequality in (\ref{eq:W difference}) is an equality. That is
\begin{align}
    V(x_{k+1})-V(x_k)=&\ u_k(y_{k+1}-y_k);\label{eq:plant lossless}\\
    \widetilde V(\widetilde x_{k+1})-\widetilde V(\widetilde x_k)=&\ e_k(\widetilde x_{k+1}-\widetilde x_{k}).\label{eq:HIGS lossless}
\end{align}
We prove in the following that (\ref{eq:plant lossless}) and (\ref{eq:HIGS lossless}) cannot hold together at all time indices $k$ unless $(x_k,\widetilde x_k)=(0,0)$. We consider the case that (\ref{eq:plant lossless}) and (\ref{eq:HIGS lossless}) hold for some index $k$ and all future indices $k+1$, $k+2$, $\cdots$.
When (\ref{eq:HIGS lossless}) holds, we have that
\begin{equation}\label{eq:HIGS lossless expansion}
    \frac{1}{2k_h}\widetilde x_{k+1}^2-\frac{1}{2k_h}\widetilde x_k^2 = e_k(\widetilde x_{k+1}-\widetilde x_k).
\end{equation}
We consider the following two cases, where the HIGS is assumed to work in the integrator mode and the gain mode, respectively.

\noindent\textit{\textbf{Case 1}}. Integrator mode. In this case, $(\widetilde x_k,e_k)\in \mc F$ and
\begin{equation}\label{eq:integrator mode state eq}
    \widetilde x_{k+1} = \widetilde x_k+\omega_h e_k.
\end{equation}
Substituting (\ref{eq:integrator mode state eq}) in (\ref{eq:HIGS lossless expansion}) yields
\begin{equation}\label{eq:integrator mode lossless}
    (\omega_h-2k_h)e_k^2+2\widetilde x_k e_k=0.
\end{equation}
\textit{\textbf{Case 1a}}. Suppose $e_k\neq 0$. Then we have $\widetilde x_k = (k_h-\frac{\omega_h}{2})e_k$, which can be substituted in the inequality in (\ref{eq:F}) and yields
\begin{equation*}
    (k_h+\frac{\omega_h}{2})e_k^2 \geq \frac{1}{k_h}(k_h+\frac{\omega_h}{2})^2e_k^2.
\end{equation*}
This, after simplification, becomes
\begin{equation*}
    \omega_h^2+2k_h\omega_h\leq 0.
\end{equation*}
Considering the fact that $k_h>0$ and $\omega>0$, the above condition can never be satisfied. Hence, \textit{Case 1a} can never happen.

\noindent\textit{\textbf{Case 1b}}. Suppose $e_k = 0$. Then (\ref{eq:integrator mode lossless}) is always satisfied. In this case, $(\widetilde x_k,e_k)\in \mathcal F$ implies that $\widetilde x_k=0$. According to (\ref{eq:integrator mode state eq}), we have that $\widetilde x_{k+1}=0$ as well. The condition for $(\widetilde x_{k+1},e_{k+1})\in \mc F$ can be simplified to be
\begin{equation*}
    (k_h-\omega_h)e_{k+1}^2\geq 0.
\end{equation*}
The fact that $k_h-\omega_h\geq 0$ guarantees that the next active mode is the integrator mode. Note that this condition is irrelevant to the HIGS input or state. Indeed, since \textit{Case 1a} can never happen, then the system will fall in \textit{Case 1b} for all future time indices $k+1$, $k+2$, $\cdots$. Following from a similar analysis, we have that 
\begin{equation}\label{eq:case1b e=0}
    0=e_k=e_{k+1}=e_{k+2}=\cdots,
\end{equation}
and also
\begin{equation}\label{eq:case1b tilde x=0}
    0=\widetilde x_{k} = \widetilde x_{k+1} = \widetilde x_{k+2}=\cdots.
\end{equation}
Since $u_k = \widetilde y_k = \widetilde x_{k+1}$, then according to (\ref{eq:case1b tilde x=0}) and (\ref{eq:G(z) state eq}), we have that
\begin{equation}\label{eq:case1b iteration}
    x_{k+1} = Ax_k,\quad x_{k+2} = Ax_{k+1} = A^2x_k, \quad \cdots
\end{equation}
Since $e_k = y_k = Cx_k$, then according to (\ref{eq:case1b e=0}) and (\ref{eq:case1b iteration}), we have that
\begin{equation*}
    \begin{bmatrix}
        C\\CA\\ \vdots\\ CA^{n-1}
    \end{bmatrix}x_k=0.
\end{equation*}
This implies that $x_k=0$ due to the observability of $G(z)$. In this case, $(x_k,\widetilde x_k)=(0,0)$. The closed-loop system is already in its equilibrium.

\noindent\textit{\textbf{Case 2}}. Gain mode. In this case, $(\widetilde x_k,e_k)\notin \mc F$, and we have that
\begin{equation}\label{eq:gain mode state eq}
    \widetilde x_{k+1}=k_he_k.
\end{equation}
Substituting (\ref{eq:gain mode state eq}) in (\ref{eq:HIGS lossless expansion}), we have that
\begin{equation*}
    (\widetilde x_k - k_h e_k)^2 = 0.
\end{equation*}
That is
\begin{equation}\label{eq:case2 tilde x_k=tilde x_{k+1}}
    \widetilde x_k = k_he_k = \widetilde x_{k+1}.
\end{equation}
The condition $(\widetilde x_k,e_k)\notin \mc F$ implies that
\begin{equation*}
    (k_h+\omega_h)e_k^2 > 0.
\end{equation*}
This implies that $e_k\neq 0$. We only need consider the case that the HIGS operates in the gain mode for all future indices. This is because that under the constraints (\ref{eq:HIGS lossless}), if it enters the integrator mode, it will never exit the integrator mode, according to the analysis in \textit{Case 1b}. Then it falls into \textit{Case 1}. In the case that the system keeps operating in the gain mode, following from the same derivation of (\ref{eq:case2 tilde x_k=tilde x_{k+1}}), we have that
\begin{equation}\label{eq:case2 tilde x_{k+1} = tilde x_{k+2}}
	\widetilde x_{k+1} = k_he_{k+1}=\widetilde x_{k+2}.
\end{equation}
Comparing (\ref{eq:case2 tilde x_k=tilde x_{k+1}}), (\ref{eq:case2 tilde x_{k+1} = tilde x_{k+2}}) and similar equations for future time indices, we have that
\begin{equation*}
	\widetilde x_k = k_he_k =\widetilde x_{k+1} = k_h e_{k+1} = \widetilde x_{k+2} = k_he_{k+2}=\cdots.
\end{equation*}
That is
\begin{equation*}
	e_k = e_{k+1}=e_{k+2}=\cdots.
\end{equation*}
This implies that
\begin{equation}\label{eq:case2 y_k all equal}
	y_k = y_{k+1}=y_{k+2}=\cdots.
\end{equation}
In this case, we have that
\begin{align*}
	x_{k+1} =&\ Ax_k+Bu_k = Ax_k+B\widetilde y_k=Ax_k+B\widetilde x_{k+1}\notag\\
	=&\ Ax_k+Bk_he_k = Ax_k+k_hBCx_k\notag\\
	=&\ (A+k_hBC)x_k.
\end{align*}
Similarly, we have
\begin{align*}
	x_{k+2}=&\ (A+k_hBC)x_{k+1}=(A+k_hBC)^2x_k,\notag\\
	\vdots &\notag\\
	x_{k+n-1}=&\ (A+k_hBC)^{n-1}x_k.
\end{align*}
According to (\ref{eq:case2 y_k all equal}), we have that
\begin{equation*}
	\begin{bmatrix}
		y_{k+1}-y_k\\ y_{k+2}-y_{k+1}\\ \vdots \\ y_{k+n}-y_{k+n-1}
	\end{bmatrix}=0,
\end{equation*}
which implies
\begin{equation}\label{eq:case2 observability condition}
	\begin{bmatrix}
		C\\C(A+k_hBC)\\ \vdots \\ C(A+k_hBC)^{n-1}
	\end{bmatrix}(x_{k+1}-x_k) = 0.
\end{equation}
We use eigenvector test to prove that observability of $(A,C)$ implies that of $(A+k_hBC,C)$. Suppose $\eta \neq 0$ is a vector in the kernel of $C$; i.e., $C\eta = 0$. Then it is not an eigenvector of $A$; i.e., $A\eta \neq \lambda \eta$ for all scalars $\lambda$. Then $\eta$ is not an eigenvector of $A+k_hBC$ as well because $(A+k_hBC)\eta = A\eta + k_hBC\eta = A\eta \neq \lambda \eta$ for all $\lambda$, considering $C\eta = 0$. Hence, $(A+k_hBC,C)$ is observable and (\ref{eq:case2 observability condition}) implies that $x_{k+1}=x_k$. That is, $x_k$ is an eigenvector of $A+k_hBC$ with an eigenvalue $\lambda=1$. This implies that
\begin{equation*}
	x_k = x_{k+1}=x_{k+2}=\cdots.
\end{equation*}
In this case, we also have that
\begin{align*}
	x_k =&\ x_{k+1}=Ax_k+Bu_k=Ax_k+B\widetilde y_k = Ax_k+B\widetilde x_{k+1}\notag\\
	=&\ Ax_k+Bk_he_k.
\end{align*}
This implies that
\begin{equation*}
	x_k=k_h(I-A)^{-1}Be_k.
\end{equation*}
Also, we have that
\begin{equation}\label{eq:e_k=k_hG(1)e_k}
	e_k = Cx_k=k_hC(I-A)^{-1}Be_k.
\end{equation}
Since we have $e_k\neq 0$ in \textit{Case 2}, then (\ref{eq:e_k=k_hG(1)e_k}) implies that
\begin{equation*}
	k_hC(I-A)^{-1}B=1,
\end{equation*}
which is
\begin{equation*}
	k_hG(1)=1.
\end{equation*}
This contradicts (\ref{eq:k_h condition}).
To conclude, we have shown that if (\ref{eq:plant lossless}) and (\ref{eq:HIGS lossless}) hold together for all future time indices, then the HIGS cannot stay in the gain mode according to the analysis in \textit{Case 2}. It will eventually switch to the integrator mode. Then, according to the analysis in \textit{Case 1}, the HIGS will stay in the integrator mode. However, we have shown in \textit{Case 1b} that this is only possible if the system is already at the equilibrium. In other words, if the system is not at the equilibrium, then (\ref{eq:plant lossless}) and (\ref{eq:HIGS lossless}) cannot hold together for all future indices, and $W(x_k,\widetilde x_k)$ will eventually decrease again until the system reaches its equilibrium. This means that the system is asymptotically stable.
\end{proof}

\section{EXAMPLE}\label{sec:example}
In this section, we demonstrate the feasibility of the proposed stability results in Theorem \ref{theorem:stability}. Consider a mass-spring system shown in Fig.~\ref{fig:example}.
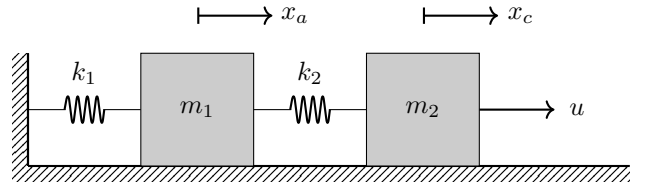
\begin{figure}[h!]
\centering
\ctikzset{bipoles/length=1cm}
\begin{circuitikz}
	\pattern[pattern = north east lines] (-0.2,-0.2) rectangle (0,1.5);
	\draw[thick] (0,0) -- (0,1.5);
	\pattern[pattern = north east lines] (0,-0.2) rectangle (8,0);
	\draw[thick] (0,0) -- (8,0);	
	\draw[thick] (0,0) -- (0,1.5);
	\draw (0,0.75) to [spring, l=${k_1}$] (1.5,0.75);
	\draw[fill=gray!40] (1.5,0) rectangle (3,1.5);
\node at (2.25,0.75){$m_1$};
\draw (3,0.75) to [spring, l=${k_2}$] (4.5,0.75);
	\draw[fill=gray!40] (4.5,0) rectangle (6,1.5);
\node at (5.25,0.75){$m_2$};
\draw[thick, ->] (6,0.75) -- (7,0.75);
        \node at (7.3,0.75){$u$};
 \draw[thick, |->] (2.25,2) -- (3.25,2);
        \node at (3.55,2){$x_a$};
\draw[thick, |->] (5.25,2) -- (6.25,2);
     \node at (6.55,2){$x_c$};
\end{circuitikz}
\caption{A mass-spring system with masses $m_1 = 0.04kg$, $m_2 = 0.02kg$ and spring constants $k_1=2N/m$ and $k_2 = 1N/m$. $x_a$ and $x_c$ denote the displacement of the masses $m_1$ and $m_2$, respectively. A force input $u$ is applied on the mass $m_2$.}
\label{fig:example}
\end{figure}

Sampling the system with the period $h=0.04s$ using a ZOH device, we obtain the following discrete-time model (see also \cite{aastrom2013computer}).
\begin{subequations}\label{eq:example plant}
	\begin{align}
\Sigma:\		x_{k+1} =&\ A x_k+Bu_k,\\
		y_k =&\ Cx_k,
	\end{align}
\end{subequations}
where
\begin{align}
	A =& \scriptsize \begin{bmatrix}
			\frac{1}{3}c_1+\frac{2}{3}c_2 & \frac{1}{15}s_1+\frac{1}{15}s_2 & \frac{1}{3}c_1-\frac{1}{3}c_2 & \frac{1}{15}s_1-\frac{1}{30}s_2 \\ \\ -\frac{5}{3}s_1-\frac{20}{3}s_2 & \frac{1}{3}c_1+\frac{2}{3}c_2 & -\frac{5}{3}s_1+\frac{10}{3}s_2 & \frac{1}{3}c_1-\frac{1}{3}c_2\\ \\  \frac{2}{3}c_1-\frac{2}{3}c_2 & \frac{2}{15}s_1-\frac{1}{15}s_2 & \frac{2}{3}c_1+\frac{1}{3}c_2 & \frac{2}{15}s_1+\frac{1}{30}s_2\\ \\-\frac{10}{3}s_1+\frac{20}{3}s_2 & \frac{2}{3}c_1-\frac{2}{3}c_2 & -\frac{10}{3}s_1-\frac{10}{3}s_2 & \frac{2}{3}c_1+\frac{1}{3}c_2
		\end{bmatrix},\notag\\
	B=& \begin{bmatrix}
			-\frac{2}{3}c_1+\frac{1}{6}c_2+\frac{1}{2} \\ \frac{10}{3}s_1-\frac{5}{3}s_2\\ -\frac{4}{3}c_1-\frac{1}{6}c_2+\frac{3}{2}  \\ \frac{20}{3}s_1-\frac{5}{3}s_2
		\end{bmatrix}\notag\\
	C =& \begin{bmatrix}
		0& 0 & 1 & 0
		\end{bmatrix},\notag
\end{align}
with
\begin{align*}
	c_1 =& \cos(5h)=\cos(0.2); \ c_2 = \cos(10h)=\cos(0.4);\\
	s_1 =& \sin(5h)=\sin(0.2); \hspace{0.188cm} s_2 = \sin(10h)=\sin(0.4).
\end{align*}
Here, $x_k = \begin{bmatrix}
	x_{ak} & x_{bk} & x_{ck} &x_{dk}
\end{bmatrix}^T\in \mb R^4$, $u_k,y_k\in \mb R$ are the state, input and output of the system, respectively. $x_{ak}$ and $x_{bk}$ are the displacement and velocity of the mass $m_1$ while $x_{ck}$ and $x_{dk}$ are the displacement and velocity of the mass $m_2$, respectively, at time step $k$. This system is NI according to Definition \ref{def:DT_NNI} with the storage function
\begin{equation*}
	V(x_k) = x_k^TPx_k,
\end{equation*}
where
\begin{equation*}
	P = \begin{bmatrix}
		3 & 0 & -1 & 0\\
		0 & 0.04 & 0 & 0\\
		-1 & 0 & 1 & 0\\
		0 & 0 & 0 & 0.02
	\end{bmatrix}.
\end{equation*}

We apply a HIGS controller of the form (\ref{eq:DT-HIGS}) in positive feedback with the plant (\ref{eq:example plant}). For the plant (\ref{eq:example plant}), we have that $G(1) = C(I-A)^{-1}B = \frac{3}{2}$. Hence, we choose the HIGS parameters to be $\omega_h = 0.1$, $k_h = 0.6$, which satisfies the condition $0<\omega_h\leq k_h<\frac{1}{G(1)}$ as required in Theorem \ref{theorem:stability}. A simulation is implemented with the initial values $x_0 = \begin{bmatrix}
	3 & -2 & 5 & -1
\end{bmatrix}^T$. The state trajectories of the plant and the HIGS controller are shown in Fig.~\ref{fig:simulation}.
\begin{figure}[h!]
\centering
\psfrag{HIGS state}{\scriptsize$x(k)$}
\psfrag{plant state}{\scriptsize$\widehat x(k)$}
\psfrag{k}{$k$}
\psfrag{xa}{\scriptsize$x_a$}
\psfrag{xb}{\scriptsize$x_b$}
\psfrag{xc}{\scriptsize$x_c$}
\psfrag{xd}{\scriptsize$x_d$}
\psfrag{xco}{\scriptsize$\widetilde x$}
\psfrag{State}{\small State}
\psfrag{State Trajectories}{\hspace{-0.3cm}State Trajectories}
\psfrag{5}{\scriptsize$5$}
\psfrag{10}{\scriptsize$10$}
\psfrag{15}{\scriptsize$15$}
\psfrag{0}{\scriptsize$0$}
\psfrag{500}{\scriptsize$500$}
\psfrag{1000}{\scriptsize$1000$}
\psfrag{1500}{\scriptsize$1500$}
\psfrag{2000}{\scriptsize$2000$}
\psfrag{6}{\scriptsize$6$}
\psfrag{-5}{\scriptsize\hspace{-0.15cm}$-5$}
\psfrag{-10}{\scriptsize\hspace{-0.15cm}$-10$}
\psfrag{-15}{\scriptsize\hspace{-0.15cm}$-15$}
\psfrag{-20}{\scriptsize\hspace{-0.15cm}$-20$}
\includegraphics[width=8cm]{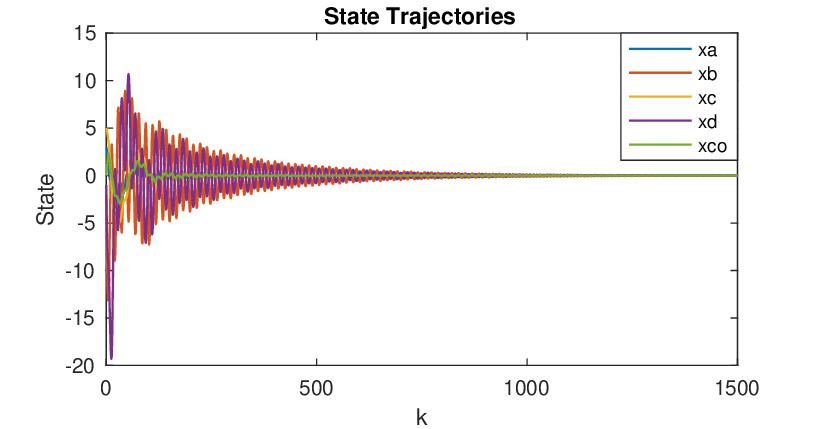}
\caption{State trajectories of the plant (\ref{eq:example plant}) and the HIGS (\ref{eq:DT-HIGS}), which are interconnected in positive feedback. Starting from nonzero initial conditions, all the state variables converge to zero. The closed-loop system is asymptotically stable, which is consistent with our expectation according to Theorem \ref{theorem:stability}.}
\label{fig:simulation}
\end{figure}

\section{CONCLUSION AND FUTURE WORK}\label{sec:conclusion}
We proposed a control framework for the digital control of linear NI systems using HIGS controllers. Discrete-time HIGS are shown to be SANI systems. For any linear discrete-time NI systems obtained via ZOH sampling, there exists a HIGS controller such that their closed-loop interconnection is asymptotically stable. An example is provided, where a discretized mass-spring system, which is NI, is stabilized using a HIGS controller.

The results presented in this paper can be generalized to multi-input multi-output (MIMO) systems by introducing a discrete-time multi-HIGS. The SANI property of a discrete-time multi-HIGS can also be investigated. Also, the stability of the closed-loop interconnection of a MIMO discrete-time NI system and a multi-HIGS can be investigated.

%\input{Conclusion.tex}

%\addtolength{\textheight}{-12cm}
%\input{Acknowledgment.tex}
\bibliographystyle{IEEEtran}
% Generated by IEEEtran.bst, version: 1.14 (2015/08/26)

\end{document}